\definecolor{myurlcolor}{rgb}{0,0,0.7}
\definecolor{myrefcolor}{rgb}{0.8,0,0}
\newcommand{\beq}[0]{\begin{equation}}
\newcommand{\eeq}[0]{\end{equation}}
\newcommand{\bw}[0]{\begin{widetext}}
\newcommand{\ew}[0]{\end{widetext}}
\newcommand{\bc}[0]{\begin{center}}
\newcommand{\ec}[0]{\end{center}}
\newcommand{\bwn}[0]{\begin{widetext}\begin{eqnarray}}
\newcommand{\ewn}[0]{\end{eqnarray}\end{widetext}}
\newcommand{\beqn}[0]{\begin{eqnarray}}
\newcommand{\eeqn}[0]{\end{eqnarray}}
\newcommand{\ket}[1]{|#1\rangle}
\newcommand{\non}[0]{\nonumber\\}
\newcommand{\spann}[0]{\mathrm{span}}
\def\calB{{\cal B}}
\def\calH{{\cal H}}
\def\calS{{\cal S}}
\newtheorem{thm}{Theorem}
\newtheorem{defi}[thm]{Definition}
\newtheorem{prop}[thm]{Proposition}
\newtheorem{fakt}[thm]{Fact}
\newcommand{\beu}{\begin{equation}}
\newcommand{\eeu}{\end{equation}}
\newcommand{\be}{\begin{eqnarray}}
\newcommand{\ee}{\end{eqnarray}}
\newcommand{\ba}{\begin{array}}
\newcommand{\ea}{\end{array}}
\newcommand{\cee}[1]{\mathbb{C}^{#1}}
\newcommand{\dmin}[0]{d_{\mathrm{min}}}
\newcommand{\dmax}[0]{d_{\mathrm{max}}}
\begin{document}

\title{
Negative result about the construction of genuinely entangled subspaces from unextendible product bases}
\author{Maciej Demianowicz}
 \affiliation{{\it\small Institute of Physics and Applied Computer Science, Faculty of Applied Physics and Mathematics,
		Gda\'nsk University of Technology, Narutowicza 11/12, 80-233 Gda\'nsk, Poland}}
\email{maciej.demianowicz@pg.edu.pl}

\begin{abstract}
	
Unextendible product bases (UPBs) provide a versatile tool with various applications across different areas of quantum information theory. Their comprehensive characterization is thus of great importance and has been a subject of vital interest for over two decades now.   An open question asks about the existence of UPBs, which are genuinely unextendible, i.e., they are not extendible even with biproduct vectors. In other words, the problem  is to verify whether there exist genuinely entangled subspaces (GESs), subspaces composed solely of genuinely multiparty entangled states, complementary to UPBs. We solve this problem in the negative for many sizes of UPBs in different multipartite scenarios. In particular, in the all-important case of equal local dimensions, we show that there are always forbidden cardinalities for such UPBs, including the minimal ones corresponding to GESs of the maximal dimensions.

\end{abstract}
\maketitle

\section{Introduction}
Unextendible product bases (UPBs), that is,  incomplete sets of orthogonal product vectors, such that no other product vector orthogonal to their span exists, are a very important notion of quantum information theory \cite{upb,big-upb,upb-comment}. While being significant from a purely mathematical standpoint, 
UPBs are also related to a variety of problems with more practical implications. For example, they provide examples of sets of product vectors which cannot be distinguished with local operations and classical communication, the so-called nonlocality without entanglement phenomenon \cite{without-entanglement}.
Also, by the very definition, they give rise to constructions of completely entangled subspaces (CESs) \cite{Parthasarathy,Bhat,WalgateScott}, that is, subspaces void of product vectors. Importantly, mixed states supported on such subspaces, which are necessarily entangled, have positive partial transposition (PPT) across any partition \cite{upb} and thus are indistillable or, in other words, bound entangled (BE) \cite{PH,H3}.

Much effort has thus been put into finding and characterizing UPBs in different bipartite and multipartite setups (a largely incomplete list of works includes, e.g., \cite{alon-lovasz,feng, djokovic,johnston-qubits,min-upb-johnston,3x4,large-upb}).
However, all those attempts were not concerned about the type of entanglement of entangled states in the orthocomplement of a UPB. 
In \cite{upb-to-ges} we  thus posed a problem whether it is possible to construct multipartite  genuinely unextendible product bases (GUPBs), that is UPBs  which are unextendible in a stronger sense, even with a biproduct vector (i.e., a vector product across a bipartition). This amounts to asking about the construction of  genuinely entangled subspaces (GESs), i.e., subspaces composed only of genuinely multiparty entangled (GME) states \cite{upb-to-ges,cubitt} (see also \cite{Parthasarathy}), from UPBs. 
While there has been progress in constructing GESs \cite{Agrawal, ent-ges,approach,stabilizerGES,antipin,antipin-diagrammatic}, with a very recent result solving the problem in full generality \cite{universal-ges}, the problem of determining the (non)existence of GUPBs has remained basically unexplored  (albeit see \cite{4x4}) and open.

 We address the problem in the present work and  derive a lower bound on the cardinalities of GUPBs in a general multipartite scenario. The bound is universally applicable in systems with equal local dimensions, in which case it  shows that  for many cardinalities of UPBs, including the minimal permissible ones, they are always extendible with biproduct vectors, i.e., GUPBs do not exist in those cases. In turn,  GESs with many dimensionalities, including the maximal ones, cannot be constructed from UPBs. 
  Our result, besides being a contribution to the theory of UPBs, has two important immediate implications. It rules out the possibility of  constructing through the usual approach PPT GME BE states  with certain ranks  from UPBs and shows that in many cases strongly nonlocal sets of product vectors cannot be built from, natural candidates for this task, GUPBs   (cf.  \cite{strong-nonlocality,yuan,strong-upb, strong-upb-hetero}).

\section{Preliminaries }

We provide here a terse summary of the necessary notions and results.

We consider $n$-partite Hilbert spaces with local dimensions $d_i$, $\calH=\bigotimes_{i=1}^n \cee{d_i}$, and  assume $n \ge 3$  and $d_i \ge 3$ as these cases only are relevant. We will write $(d_1,d_2,\dots,d_n)$ in  nondecreasing order to denote the set of local dimensions and its permutations, e.g., $(3,3,4)$ means that any two of the subsystems are qutrits and one of them is a ququad. Local subsystems are denoted by $A_i$ and the whole system is $\textbf{A}$. A state $\ket{\psi}_{\textbf{A}}\in \calH$ is called {\it fully product}, or simply {\it product}, if it is a product of single-party states, $\ket{\psi}_{\textbf{A}}=\ket{\varphi}_{A_1}\otimes\cdots \otimes \ket{\xi}_{A_n}$. If it is not the case, a state is said to be {\it entangled}.
Some entangled states are {\it biproduct}, that is they are product across at least one of the bipartitions of the parties $S|\bar{S}$, $ S\cup \bar{S}=\textbf{A}$,  i.e., $\ket{\psi}_{\textbf{A}}=\ket{\zeta}_{S}\otimes\ket{\xi}_{\bar{S}}$. If an entangled state is {\it not} biproduct we say that it is {\it genuinely multiparty entangled (GME)}.

 Subspaces composed only of entangled vectors are called completely entangled (CES) \cite{Parthasarathy,Bhat,WalgateScott}; among those there are subspaces only with GME vectors and they are named  genuinely entangled (GES) \cite{upb-to-ges,Agrawal,ent-ges,approach,antipin,condition,universal-ges,antipin-diagrammatic,cubitt}. The maximal dimension of a GES is \cite{cubitt,upb-to-ges}
 $D - D/\dmin -\dmin+1$,  with $D=\Pi_{i=1}^n d_i$ and $\dmin=\min (d_1,d_2,\dots,d_n)$.

 An unextendible product basis (UPB) is an incomplete  set of product vectors with the property that there does not exist a product vector orthogonal to all of them \cite{upb,big-upb}. Obviously, a UPB defines a CES in its orthocomplement.
 The simplest example of a UPB is the following four-element set of vectors from $(\cee{2})^{\otimes 3}$: $\calS=\{ \ket{0}\ket{0}\ket{0}, \ket{1}\ket{+}\ket{-}, \ket{-}\ket{1}\ket{+}, \ket{+}\ket{-}\ket{1}\}$, $\ket{\pm}=\ket{0}\pm \ket{1}$. This UPB is known under the name SHIFTS \cite{upb}. A particularly useful tool in the analyses of UPBs (or, more generally, sets of product vectors) is an orthogonality graph (see Fig. \ref{ort-graf}).
 
 \begin{figure}[h!]
 	\includegraphics[width=4.6cm,height=4.6cm]{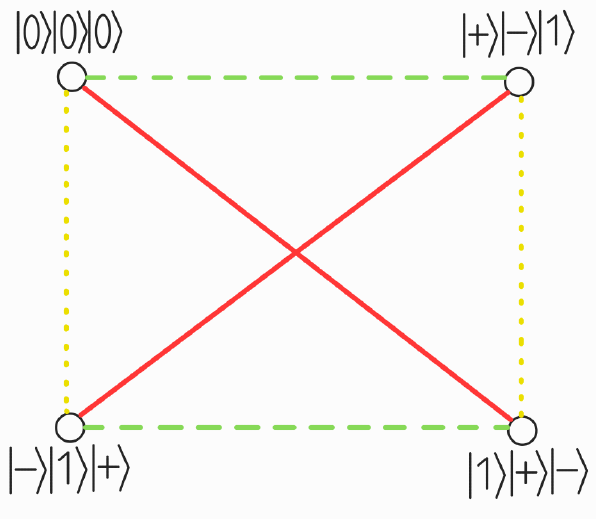}
 	\caption{A set of product vectors, in particular, a UPB, can be represented as a complete graph with colored edges with colors corresponding to sites on which two states are orthogonal. The figure shows such a graph for the SHIFTS UPB $\calS$. Red solid lines corresponds to orthogonality on the first site, gold dotted lines the second, and green dashed lines the third.
 	}\label{ort-graf}
 \end{figure}
 
 In the present paper we ask about the (non)existence of UPBs, which are not extendible in a stronger sense, namely, there does not exist a biproduct vector orthogonal to its elements ($\calS$ is clearly not such a set). Such UPBs must also be bipartite UPBs across any cut of the parties \cite{upb-to-ges}.  We propose the following
 
 \begin{defi}
 	A multipartite  UPB, which is a UPB  for any bipartition, is called a genuinely unextendible product basis (GUPB).
 \end{defi}
 
 Clearly, the orthocomplement of the span of a GUPB is a GES. The existence of GUPBs would provide then a very convenient way of constructing GESs and in turn GME mixed states.
 
It is known that UPBs do not exist in systems  $\cee{2}\otimes \cee{m}$, implying that GUPBs do not exist when at least one of the subsystems is a qubit, which is why in what follows we assume  $d_i \ge 3$. It is also well established that UPBs do not exist with arbitrary cardinalities. The following summarizes what is currently known regarding their sizes \cite{alon-lovasz,feng,no-very-large-upb,large-upb,min-upb-johnston}.
\begin{fakt}\label{min-bound}
(i) The theoretical  minimum number of elements in a UPB in  $\cee{d_1}\otimes \cee{d_2}$ ($d_1,d_2 \ge 3$)  is (a) $d_1+d_2$, when $d_1,d_2 \ge 4$ are even, or (b) $d_1+d_2-1$, in the remaining cases.
(ii) The achievable upper bound on the cardinality of a UPB is $d_1d_2-4$.
	\end{fakt}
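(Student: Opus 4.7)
My plan is to split Fact \ref{min-bound} into its three claims (i)(a), (i)(b), and (ii), and to prove each by reducing it to the Bennett \etal\ characterization from \cite{upb}: an orthogonal set of product vectors $\{\ket{a_i}\ket{b_i}\}_{i=1}^N$ in $\cee{d_1}\otimes \cee{d_2}$ is a UPB if and only if, for every partition of the index set $\{1,\dots,N\}=S\sqcup S^c$, either $\{\ket{a_i}\}_{i\in S}$ spans $\cee{d_1}$ or $\{\ket{b_i}\}_{i\in S^c}$ spans $\cee{d_2}$. This equivalence itself comes from noting that any hypothetical extending product vector $\ket{a}\ket{b}$ would single out such a bipartition via $S:=\{i:\inner{a}{a_i}=0\}$.

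For part (i)(b) the counting argument is essentially immediate: I would pick any subset $S$ with $|S|=d_1-1$; since the Alice vectors indexed by $S$ cannot span $\cee{d_1}$ for cardinality reasons, the Bob vectors indexed by $S^c$ must span $\cee{d_2}$, forcing $|S^c|\geq d_2$ and hence $N\geq d_1+d_2-1$. Attainment is shown by exhibiting an explicit family (for example GenShifts \cite{upb}) realizing this bound whenever (i)(a) does not improve it.

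The hard part will be the strengthening (i)(a) to $d_1+d_2$ when both $d_1,d_2\geq 4$ are even. Here I would follow the Alon--Lov\'asz/Feng strategy \cite{alon-lovasz,feng} and argue by contradiction from a hypothetical UPB of size exactly $d_1+d_2-1$. Equality in the counting step above forces, for every $(d_1-1)$-subset $S$, the Bob vectors indexed by $S^c$ to form a basis of $\cee{d_2}$, and symmetrically on the Alice side. From this rigidity I would distil a $\mathbb{Z}_2$-valued combinatorial invariant associated with exchanging single indices between such minimal covers, and show that counting it two different ways -- once using Alice's dimensions, once using Bob's -- yields parities that agree if and only if at least one of $d_1,d_2$ is odd. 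This gives the contradiction; bookkeeping the parity carefully is the most delicate step.

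For the upper bound (ii) I would dualise, setting $K:=d_1 d_2-N$ for the dimension of the orthocomplementary CES, and rule out $K\in\{0,1,2,3\}$. The case $K=0$ is trivial; for $K=1$ I would use the known fact that any $d_1 d_2-1$ mutually orthogonal product vectors extend uniquely to a full orthogonal product basis, so the remaining vector is itself product and unextendibility fails. For $K=2,3$ I would invoke the corresponding small-orthocomplement results of \cite{no-very-large-upb,upper-bound}, which guarantee that an orthocomplement of that small dimension to an orthogonal set of product states must still contain a product vector. Tightness of $N=d_1 d_2-4$ is witnessed by standard constructions such as the Tiles UPB in $\cee{3}\otimes\cee{3}$.
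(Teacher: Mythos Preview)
The paper does not prove Fact~\ref{min-bound} at all: it is stated as a known result, with attribution to \cite{alon-lovasz,feng,no-very-large-upb,min-upb-johnston,upper-bound}, and is then used as a black box in the subsequent argument. There is therefore no ``paper's own proof'' to compare against.

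Your outline is a reasonable sketch of how the cited literature establishes the fact, but a couple of points are underspecified. For (i)(a), the parity obstruction you describe is the Alon--Lov\'asz mechanism, but your description of the ``$\mathbb{Z}_2$-valued combinatorial invariant'' is too vague to count as a proof plan; the actual argument passes through the orthogonality graph and a parity count on local bases, and you would need to spell out which quantity you are double-counting. You also need to separately argue \emph{attainment} of $d_1+d_2$ in the both-even case, which is a construction problem (handled in \cite{min-upb-johnston}), not just the lower bound. For (ii), your treatment of $K=1$ relies on the statement that $d_1d_2-1$ mutually orthogonal product vectors always extend to a full product basis; this is true but not obvious, and itself requires an argument via the local orthogonality structure rather than a one-line remark. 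Finally, Tiles only witnesses tightness for $d_1=d_2=3$; the general $d_1d_2-4$ construction is in \cite{upper-bound} and should be cited as such.
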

By considering all bipartite cuts of an $n$-partite system with $d_i$-dimensional subsystems, it follows that the minimal permissible cardinality of a GUPB is 
$D/\dmin + \dmin$, when both $\dmin$ and $D/\dmin$ are even, and $D/\dmin + \dmin-1$ in the remaining cases.
  This already implies that maximal  GESs cannot be constructed from UPBs for  even $\dmin$ and $D/\dmin$  and arbitrary $n$; the maximal allowed dimensions of GESs 
   and the minimum cardinalities of UPBs simply do not match. Moreover, GESs of dimensions $1$, $2$, and $3$ cannot be constructed in this way either.
We will be further interested in eliminating cardinalities, which are theoretically possible by Fact \ref{min-bound}. 

Our proof is based on the necessary and sufficient condition for extendibility of a set of product vectors due to DiVincenzo and co-workers \cite{upb,big-upb}. We recall it here in the bipartite formulation since, as noted above, this is relevant for our purpose.

\begin{fakt}\label{crucial}
	Let $\calB=\{\ket{\varphi_i}\otimes \ket{\psi_i}\}_{i=1}^{k}$, $k\ge m+n-1$, be a set of product vectors from $\cee{m}\otimes \cee{n}$. There exists a product vector orthogonal to all elements of $\calB$, i.e., $\calB$ is extendible, if and only if there exist a partition of $\calB$ into disjoint sets $\calB_1$ and $\calB_2$ such that the $\ket{\varphi_i}$'s of $\calB_1$ do not span $\cee{m}$ and the $\ket{\psi_i}$'s of $\calB_2$ do not span $\cee{n}$.
\end{fakt}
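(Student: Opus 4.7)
The plan is to prove the equivalence directly, exploiting the fact that orthogonality of product vectors factorizes: $\ket{\alpha}\otimes\ket{\beta}$ is orthogonal to $\ket{\varphi_i}\otimes\ket{\psi_i}$ iff $\inner{\varphi_i}{\alpha}=0$ or $\inner{\psi_i}{\beta}=0$. Both implications then reduce to a bookkeeping step indexed by which factor carries the orthogonality relation, and the argument needs no ingredient beyond this disjunction.

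For the ($\Leftarrow$) direction, given a partition $\calB=\calB_1\cup\calB_2$ with the stated non-spanning property, I would pick any nonzero $\ket{\alpha}\in\cee{m}$ in the orthogonal complement of $\spann\{\ket{\varphi_i}\}_{i\in\calB_1}$ and any nonzero $\ket{\beta}\in\cee{n}$ in the orthogonal complement of $\spann\{\ket{\psi_i}\}_{i\in\calB_2}$; both exist precisely because the corresponding sets do not span their ambient spaces. The product $\ket{\alpha}\otimes\ket{\beta}$ is then an explicit extender: it is orthogonal to elements of $\calB_1$ through the first factor and to elements of $\calB_2$ through the second.

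For the ($\Rightarrow$) direction, starting from an extender $\ket{\alpha}\otimes\ket{\beta}$, I would define
\[
\calB_1:=\{\ket{\varphi_i}\otimes\ket{\psi_i}\in\calB : \inner{\varphi_i}{\alpha}=0\}, \qquad \calB_2:=\calB\setminus\calB_1.
\]
By construction the nonzero vector $\ket{\alpha}$ is orthogonal to every $\ket{\varphi_i}$ appearing in $\calB_1$, so those cannot span $\cee{m}$. For elements of $\calB_2$ the first inner product is nonzero by definition, and then orthogonality of $\ket{\alpha}\otimes\ket{\beta}$ to $\ket{\varphi_i}\otimes\ket{\psi_i}$ forces $\inner{\psi_i}{\beta}=0$; hence the nonzero $\ket{\beta}$ witnesses non-spanning on the other side. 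Edge cases where $\calB_1$ or $\calB_2$ is empty are handled automatically, since the empty set trivially fails to span any nontrivial space.

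I do not anticipate a real obstacle here: the whole argument is essentially the observation that orthogonality of product vectors is a Boolean disjunction, and the claimed condition is its natural rephrasing in terms of spans. The cardinality hypothesis $k\ge m+n-1$ plays no active role in either implication; it merely excludes the trivial regime in which, for every partition, either $|\calB_1|\le m-1$ or $|\calB_2|\le n-1$ by pigeonhole, making extendibility automatic regardless of the specific vectors chosen.
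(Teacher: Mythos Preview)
Your proof is correct and is the standard elementary argument. Note that the paper does not actually prove this Fact: it recalls it as a known result from Bennett \emph{et al.}\ \cite{upb,big-upb} and uses it as a black box, so there is no in-paper proof to compare against. Your observation that the cardinality hypothesis $k\ge m+n-1$ is inert for the equivalence itself is also accurate; it only serves to delineate the regime where unextendibility is a genuine constraint.
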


\section{Main result}

Let us now  prove the main result of the present work, which is a general bound on the cardinalities of GUPBs, implying, in particular, that in systems with equal local dimensions, GESs of maximal dimensions (and many smaller ones) cannot be constructed from UPBs. 
	\begin{prop} \label{main-result}
		There do not exist genuinely unextendible product bases (GUPBs) with cardinalities $k$ satisfying
		\begin{equation}\label{bound}\displaystyle
		k \le D/\dmax+ \Big\lfloor \frac{D/\dmax-2}{n-1} \Big\rfloor,
		\end{equation}
		where $D=\Pi_{i=1}^n d_i$ and $\dmax=\max (d_1,d_2,,\dots,d_n)$.
	%
	%
	\newline\indent
The condition (\ref{bound}) is a non-trivial bound on the cardinality of a GUPB if and only if
\begin{equation}\label{iff-condition}
(n-1)\dmax < n\dmin,
\end{equation}
$\dmin=\min(d_1,d_2,\dots,d_n)$, and
\beqn\label{}
&&(d_1,d_2,d_3) \ne (2p,2p,3p-1),\label{parzyste}\\
&&(d_1,d_2,d_3) \ne(2p-1,\tilde{d},3p-2), \label{nieparzyste}
\eeqn
with$p=2,3,\dots $ and $2p-1 \le \tilde{d} \le 3p-2$. 
		In particular, in the case of equal local dimensions $d$, all UPBs with cardinalities $d^{n-1}+d-1$ and $d^{n-1}+d$ are not GUPBs. 
		In consequence, genuinely entangled subsapces (GESs) of those  maximal theoretically possible with the approach dimensions cannot be constructed from GUPBs.
		\end{prop}

\begin{proof}
Let $\calB=\{\ket{v_i}_{\textbf{A}}\}_{i=1}^k$, $\ket{v_i}_{\textbf{A}}=\bigotimes_{m=1}^n \ket{u_m^{(i)}}_{A_m}$, be a set of $k$ mutually orthogonal product vectors. 
Since orthogonality of the vectors from $\calB$  stems from orthogonality of local vectors $\ket{u_m^{(i)}}$ on different sites,
 for any $\ket{v_i}$, by the pigeonhole principle, there exist at least
 \begin{equation}\label{}\displaystyle
 s:=\Big\lceil \frac{k-1}{n} \Big\rceil
 \end{equation}
vectors orthogonal to this vector on one of the sites. Consider for simplicity one of the vectors, say $\ket{v_1}$. Further, let the $s$ vectors mentioned above, which are orthogonal to $\ket{v_1}$, be $\calB_1=\{\ket{v_2},\dots,\ket{v_{s+1}}\}$ and the corresponding site of orthogonality (this can be any site when all $d_i$'s are equal) be $A_p$ (see Fig. \ref{obrazek-dowod}). It follows that the vectors from $\calB_1$ do not span locally on $A_p$ the whole Hilbert space, i.e.,  $\dim\spann\:\{\ket{u_p^{(2)}},\dots,\ket{u_p^{(s+1)}}\} < d_p$. Now, if the remaining vectors $\calB_2:=\{\ket{v_1},\ket{v_{s+2}},\dots,\ket{v_k}\}$ do not span locally on $\textbf{A}\setminus A_p$ the whole  Hilbert space,
 then, by Fact \ref{crucial}, there exists a biproduct vector orthogonal to all $\ket{v_i}$'s from $\calB$, which is given as $\ket{u_{p}^{(1)}}\otimes \ket{\xi}$ with an $(n-1)$-partite vector $\ket{\xi} \perp \spann\:\calB_2^{\textbf{A}\setminus A_p}$, where $\calB_2^{\textbf{A}\setminus A_p}$ is the set of local vectors on $\textbf{A}\setminus A_p$ of the set $\calB_2$.
	A sufficient condition for the local rank deficiency of $\calB_2$ on  $\textbf{A}\setminus A_p$  is  simply that the number of states is smaller than the dimension of the local Hilbert space
	\begin{equation}\label{warunek}
	k-s =k-\Big\lceil \frac{k-1}{n} \Big\rceil \le D/d_p-1 :=w,
	\end{equation}
where $D=\Pi_{i=1}^nd_i$.
The function $f(k)=k-\lceil \frac{k-1}{n} \rceil$ is nondecreasing in $k$ [if $\frac{k-1}{n}$ is an integer, then $f(k+1)=f(k)$]
and thus we look for the largest $k$ satisfying (\ref{warunek}).  With this aim consider the equation
\begin{equation}\label{prosty}
k-s=w.
\end{equation}
It holds that
\begin{equation}\label{}
 \frac{k-1}{n}  \le s \le  \frac{k-1}{n} + \frac{n-1}{n}.
\end{equation}
Plugging $k$ from (\ref{prosty}) in the above, we obtain
\begin{equation}\label{}
\frac{w-1}{n-1} \le s \le \frac{w-1}{n-1}+1.
\end{equation}
It follows that the optimal value is $s = \lfloor \frac{w-1}{n-1}  \rfloor+1$. Inserting this into (\ref{prosty}), we obtain the value of the largest $k$ for which (\ref{warunek}) is satisfied,
\begin{equation}\label{}
w+ \Big \lfloor \frac{w-1}{n-1} \Big \rfloor+1,
\end{equation}
and in turn the  condition
\begin{equation}\label{weaker-condition}
k \le D/d_p+ \Big\lfloor \frac{D/d_p-2}{n-1} \Big\rfloor.
\end{equation}
Since we want to obtain a general bound,
we need to consider the least preferable situation, which occurs for $d_p=\dmax$, from which Eq. (\ref{bound}) follows.

The proof of the remaining two statements is in  the Appendix.

\end{proof}

\begin{figure}[h!]
	\includegraphics[width=6cm,height=6cm]{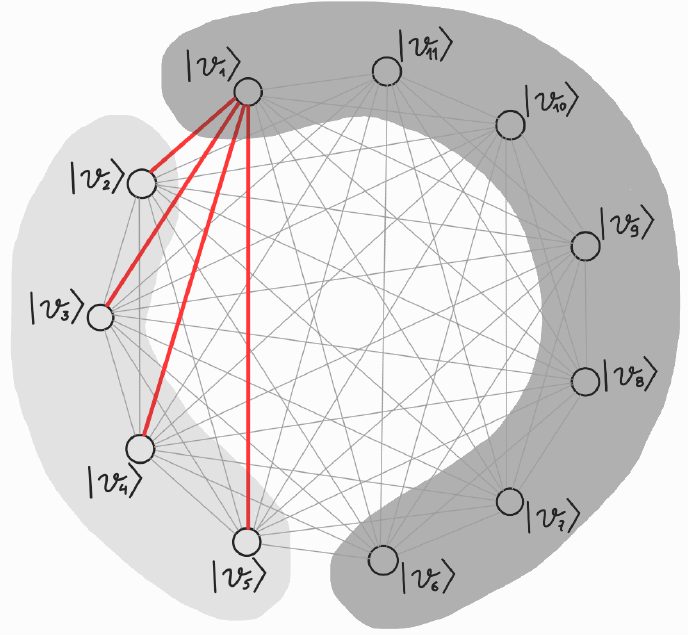}
	\caption{
		Orthogonality graph for an $11$-element set of mutually orthogonal product vectors in the three-qutrit case. It follows that vector $v_1$ must be orthogonal to at least four vectors on one of the sites, say $A_1$ (red thick edges; all the remaining irrelevant edges are light gray). This implies that vectors $v_2,v_3,v_4,v_5$ (light gray shaded  area) span locally on $A_1$ at most two-dimensional subspace. There remain only seven vectors ( dark gray shaded area), which means that they do not span locally on $A_2A_3$ the whole nine-dimensional Hilbert space. In turn, by Fact \ref{crucial}, there exists a biproduct vector orthogonal to $v_i$'s.  Since $11$ is the minimal theoretically allowed cardinality of a GUPB in this case, our result shows that the maximal GESs cannot be constructed from UPBs. The same argument holds for a $12$-element set, while it fails for $13$ elements, as there will be nine vectors in the dark gray shaded  area.
		}\label{obrazek-dowod}
\end{figure}

Table \ref{tabelka-wymiarow} shows which cardinalities apart from the minimal theoretical ones are excluded by Proposition \ref{main-result} in the $n$-qudit case for different values of $d$ and $n$.

\begin{center}
	\begin{table}
		\begin{tabular}{cccccc}
			\hline\hline \vspace{0.1cm}
			\backslashbox{$n$}{$d$}
			&3&4&5
			&6\\  \hline 
		\vspace{0.1cm}	3 & $[ 11,12 ]$  & $[ 20,23 ]$  & $[ 29,36]$  & $[ 42,53]$     \\ \vspace{0.1cm}
			4 & $[ 29,35 ]$  & $[ 68,84 ]$  & $[ 129,166]$  & $[ 222,287 ]$   \\
			5 & $[ 83,100]$  & $[ 260,319 ]$  & $[ 629,780 ]$  & $[ 1302,1619 ]$ \vspace{0.1cm} \\  \hline\hline
		\end{tabular}
		\caption{Cardinalities of GUPBs excluded by Proposition \ref{main-result} in the case of equal local dimensions $d$. The left endpoints of all intervals are minimal permissible cardinalities of GUPBs:  $d^{n-1}+d-1$ for odd $d$ and $d^{n-1}+d$ for even $d$.}	\label{tabelka-wymiarow}
	\end{table}
\end{center}
	
If any of the vectors in the set is orthogonal to $s+m$ vectors on one of the sites,
then the excluded maximal cardinality gets larger exactly by $m$.
This eliminates certain orthogonality graphs for  GUPBs. Moreover, some graphs can be further excluded in the general case if the $s$ vectors mentioned in the proof are orthogonal to one of the vectors from a UPB  on a site with nonmaximal local dimension, in which case we utilize the weaker condition (\ref{weaker-condition}) instead of the bound (\ref{bound}).

Notice that the proof does not even assume that the set is a UPB; it merely refers to a set of fully product orthogonal vectors. It may be interpreted as an indication of strong a property biproduct extendibility is.  It will be interesting to see in the future if more cardinalities can be excluded without referring to UPBs. Invoking properties of UPBs, on the other hand, would
make things much harder.

The same technique could be used to bound the cardinalities of UPBs, which are not extendible by triproduct, quadriproduct, etc., vectors.  We omit these derivations here.

We conclude this part of the paper with a few observations. First, if GUPBs existed the well-known method of constructing new UPBs from old ones by adding flags \cite{new-from-old}  would not work in their case.
More precisely, suppose we have $M$ UPBs in $\bigotimes_{l=1}^n\cee{d_l}$, $\calB_i=\{\ket{\psi_j^{(i)}}\}_{j=1}^{k_i}$, $i=1,2,\dots,M$. Then the set $\calB=\{\ket{i-1}\ket{\psi_j^{(i)}}, i=1,2,\dots,M, j=1,2\dots, k_i\}$ is a UPB in $\cee{M} \otimes (\bigotimes_{l=1}^n\cee{d_l})$.
 Clearly, this can never be true for GUPBs as there always exists a biproduct vector orthogonal to $\calB$, e.g., $\ket{0}\otimes \ket{\calB_1^{\perp}}$, where $\ket{\calB_1^{\perp}}$ is any vector from the orthocomplement of $\calB_1$. 
  Moreover, the nonexistence of GUPBs in systems  $\bigotimes_{i=1}^n\cee{d_i}$ implies the nonexistence of GUPBs with the same number of elements but in larger ($N$-partite with $N>n$) systems with local dimensions such that their products in groups equal $d_i$'s. 
  For example, if there is no GUPB with $k$ elements in $\calH_{9,3}:=(\cee{9})^{\otimes 3}$ then there is no GUPB with cardinality $k$ in $\calH_{3,6}:=(\cee{3})^{\otimes 6}=(\cee{3}\otimes \cee{3})^{\otimes 3}$.
Regretfully, this fact in combination with our result cannot be used to  eliminate further  cardinalities of GUPBs. This is because the largest cardinality of a GUPB  forbidden by Proposition \ref{main-result} in the fewer partite system is smaller than the minimal size of a GUPB in the larger system. For instance, the largest $k$ for $\calH_{9,3}$ is $120$, while the minimal permissible size of a GUPB in $\calH_{3,6}$ is $245$. On the other hand, it was shown that tensoring would work and a new $n$-partite GUPB could be constructed from two or more $n$-partite GUPBs in this manner \cite{4x4}.

\section{Conclusions}
We have demonstrated a lower bound on the cardinalities of  multipartite unextendible product bases (UPBs), which are at the same time unextendible with biproduct vectors; we called them genuinely unextendible product bases (GUPBs). We have derived a characterization of systems for which this bound eliminates certain (allowed by the theory of UPBs) sizes of GUPBs. In the particularly interesting case of equal local dimensions, our result is universally applicable and shows that maximal GESs (and many smaller ones) cannot be constructed from UPBs.
 This partially solves an open problem posed in \cite{upb-to-ges}. The proof is elementary and only uses the properties of sets of product vectors, which drives us to conjecture that GUPBs might not exist at all or at least in some multipartite systems.  We hope our research stimulates further work on (dis)proving this conjecture.

Our no-go result contributes to a better understanding of the mathematical structure of UPBs. Nonetheless, it also has important practical implications. Two main ones are that it implies that genuinely multiparty entangled bound entangled states with positive partial transposition of certain ranks cannot be constructed from UPBs, and it shows that strongly nonlocal UPBs unextendible by biproducts do not exist  with all permissible cardinalities (if at all). Further consequences are yet to be recognized.

\section{Acknowledgments}
Discussions with R. Augusiak and G. Rajchel-Mieldzioć at the Center for Theoretical Physics of the Polish Academy of Sciences in Warsaw are acknowledged.

\appendix\section{Proof of Proposition \ref{main-result} (continued)}
Here we prove the remaining two statements of Proposition \ref{main-result}.
\begin{proof}
 Condition (\ref{bound}) is nontrivial in a given setup if the bound it gives on $k$ is not smaller than the minimal size of a GUPB allowed theoretically, that is, when the  inequality 
	\begin{equation}\label{ogolnie}
	\dmin+ D/\dmin -m \le  D/\dmax+ \Big\lfloor \frac{D/\dmax-2}{n-1} \Big\rfloor
	\end{equation}
	holds,
	where $m=0$ if both $\dmin$ and $D/\dmin$ are even and $m=1$ otherwise. We can rewrite it as
	\beqn\label{rewritten}
	&&\hspace{-1cm}\dmin-D\left(\frac{1}{\dmax}\frac{n}{n-1} - \frac{1}{\dmin} \right) \non&& \hspace{+1cm}\le m -\Big\{ \frac{D/\dmax-2}{n-1} \Big\}-\frac{2}{n-1},
	\eeqn
	where $\{x\}=x- \lfloor x \rfloor$ is the fractional part of $x$. It immediately follows that a necessary condition for Eq. (\ref{rewritten}) to hold or, equivalently, Eq. (\ref{bound}) to be nontrivial  is that the expression in large parenthesis on the left-hand side (LHS) is positive, i.e.,
	%
$	(n-1)\dmax < n \dmin$,
	%
which is Eq. (\ref{iff-condition}).
	\newline \indent
When all local dimensions are equal, $d_i=:d$, $i=1,2,\dots,n$, in which case Eqs. (\ref{iff-condition})--(\ref{nieparzyste}) are visibly satisfied, we obtain
\begin{equation}\label{}
d-\frac{d^{n-1}}{n-1} \le m - \Big\{ \frac{d^{n-1}-2}{n-1} \Big\}-\frac{2}{n-1}.
\end{equation}
Instead of the above general inequality, we can consider its strongest form with $m=0$ and the fractional part set to be maximal, i.e., $\frac{n-2}{n-1}$
\begin{equation}\label{}
d-\frac{d^{n-1}}{n-1} \le -\frac{n}{n-1},
\end{equation}
which quite obviously is true for any $n$ and $d$. This settles the case of equal local dimensions.
\newline\indent
We now assume that local dimensions are not the same, $\dmax > \dmin$, and Eq. (\ref{iff-condition}) holds. From these two conditions and Eq. (\ref{rewritten}) we may infer that (i) $\dmin \ge 4$ and (ii) $\dmin \ge n$.   To see (i), we simply insert $\dmin=3$ into Eq. (\ref{iff-condition}) and obtain $\dmax < 3+3/(n-1)$. For $n\ge 4$ this gives $\dmax \le 3$, which contradicts $\dmax > \dmin$, while  for $n=3$ we get $\dmax \le 4$, that is, $(d_1,d_2,d_3)=(3,3,4)$ or $(3,4,4)$, which can be shown by direct substitution not to satisfy Eq. (\ref{rewritten}). To show  (ii), which is actually a value added to the proof, we assume the opposite, $n>\dmin$, and consider the following inequalities:
$(n-1)(\dmax-\dmin)\ge n-1 \ge \dmin$, meaning that with the assumptions made, Eq. (\ref{iff-condition}) is violated and it must be $\dmin \ge n$. Case (i) already covers $p=2$ in Eq. (\ref{nieparzyste}).
\newline\indent
We  further assume $\dmin \ge 4$.
\newline\indent
The rest of the proof will be divided into two cases (a) $n \ge 4$ and (b) $n=3$, with the first one being much more straightforward. In both cases the following quantity is of central importance:
\beqn\label{ksi}
\xi &=& D\left(\frac{1}{\dmax}\frac{n}{n-1} - \frac{1}{\dmin} \right)\non
&=& \frac{\bar{D}}{n-1} \left[n \dmin -(n-1)\dmax\right].
\eeqn
Here $\bar{D}=D/\dmin \dmax$.
\newline\indent
Case (a). We have in general
\begin{equation}\label{bound-xi}
\xi \ge \frac{\bar{D}}{n-1} \ge \frac{\dmin^{n-2}}{n-1}
\end{equation}
meaning that the LHS of Eq. (\ref{rewritten}) is bounded as
\begin{equation}\label{}
\dmin-\xi \le \dmin - \frac{\dmin^{n-2}}{n-1} \le -\frac{4}{3},
\end{equation}
where the second inequality follows from the fact that the function in the middle is nonincreasing  in both $n$ and $\dmin$  and thus its largest value is achieved for $n=4$ and $\dmin=4$. On the other hand, the minimal value of the right-hand side (RHS) of Eq. (\ref{rewritten}) is also $-4/3$, meaning that it is always satisfied for $n\ge 4$.
\newline\indent
Case (b). We first deal with the systems from Eqs. (\ref{parzyste}) and (\ref{nieparzyste}).  Note that $3p-1$ and $3p-2$ correspond to the maximal values of $\dmax$ allowed by the necessary condition (\ref{iff-condition}) in the respective cases.
First, by direct substitution, we readily verify that for systems  $(d_1,d_2,d_3)=(2p,2p,3p-1)$ the inequality (\ref{rewritten}) is indeed violated (note that $m=0$ then). In the case of $(d_1,d_2,d_3)=(2p-1,\tilde{d},3p-2)$, $p\ge 3$ (the case $p=2$ has already been covered above),  we have (now $\bar{D}=\tilde{d}$)
\begin{equation}\label{}
\dmin -\xi= 2p-1-\frac{\tilde{d}}{2} \ge \frac{p}{2} >0.
\end{equation}
This means that even when the LHS of Eq. (\ref{rewritten}) is minimal it will always exceed the RHS, which is at most zero (now $m=1$). 
\newline\indent
We are thus left with two cases of dimensions which need to be verified: (x) $(\dmin,\tilde{d}\ge \dmin,\dmax < \dmax ^{(\mathrm{max})})$, and (xx) $(2p,\tilde{d}>2p, \dmax ^{(\mathrm{max})})$,  where $ \dmax ^{(\mathrm{max})}$ is the largest value of $\dmax$ allowed by (\ref{iff-condition}).
For these cases Eq. (\ref{bound-xi}) is too rough for general conclusions, so we need finer bounds.
\newline\indent
{\it Case (x)}. The following holds:
\begin{equation}\label{}
\dmax^{(max)}= \begin{cases}
\frac{3}{2}\dmin-1 & \mathrm{for\;even} \; \dmin\\
\frac{3}{2}\dmin-\frac{1}{2} & \mathrm{for\;odd} \; \dmin.\\
\end{cases}
\end{equation}
Thus, with the assumption $\dmax < \dmax^{(\mathrm{max})}$, we obtain
\begin{equation}\label{}
\dmin-\xi \le \begin{cases}
\dmin-2\tilde{d} \le - \dmin & \mathrm{for\;even} \; \dmin\\
\dmin-\frac{3}{2}\tilde{d}\le -\frac{1}{2}\dmin & \mathrm{for\;odd} \; \dmin.\\
\end{cases}
\end{equation}
Since the minimal value of the LHS of (\ref{rewritten}) is $-3/2$ ($m=0$ and the maximal fractional part) we conclude that systems $(\dmin,\tilde{d},\dmax < \dmax ^{(\mathrm{\mathrm{max}})})$ satisfy this inequality.
\newline\indent
{\it Case (xx)}. Now
\begin{equation}\label{}
\dmin-\xi=2p-\tilde{d} \le -1
\end{equation}
%
and 
\begin{equation}\label{}
\mathrm{RHS\;of\; (\ref{rewritten})}= m-1 \ge -1
\end{equation}
implying that systems   $(2p,\tilde{d}>2p, \dmax ^{(max)})$ satisfy (\ref{rewritten}).
\newline\indent
This ends the proof.
\end{proof}

\end{document}